  \theoremstyle{plain}
  \newtheorem{prop}{\protect\propositionname}
   \newtheorem{fact}{Fact}
  \theoremstyle{plain}
  \newtheorem{cor}{\protect\corollaryname}
  \theoremstyle{remark}
  \newtheorem{rem}{\protect\remarkname}
  \theoremstyle{definition}
  \newtheorem{defn}{\protect\definitionname}
\theoremstyle{plain}
\newtheorem{thm}{\protect\theoremname}
  \theoremstyle{plain}
\newtheorem{example}{Example}
\definecolor{cite-blue}{RGB}{0,0,204}
\date{}
  \providecommand{\definitionname}{Definition}
  \providecommand{\lemmaname}{Lemma}
  \providecommand{\propositionname}{Proposition}
  \providecommand{\remarkname}{Remark}
\providecommand{\corollaryname}{Corollary}
\providecommand{\theoremname}{Theorem}
\begin{document}
\onehalfspacing

\title{Strategy-proof Popular Mechanisms}
\author{Mustafa O\u{g}uz Afacan}
\author{In\'{a}cio B\'{o}}

\address{\textbf{Afacan}: Sabanc\i{} University, Faculty of Art and Social Sciences, Orhanli,
34956, Istanbul, Turkey. e-mail:  Email:\href{mailto:mafacan@sabanciuniv.edu}{mafacan@sabanciuniv.edu}.}

\address{\textbf{Bó}: Corresponding Author. China Center for Behavioral Economics and Finance, Southwestern University of Finance and Economics, Chengdu, China; website: \protect\url{http://www.inaciobo.com}; email:
\href{mailto:inaciog@gmail.com}{inaciog@gmail.com}.  .}

\date{}
\maketitle
\begin{abstract}
We consider the allocation of indivisible objects when agents have preferences over their own allocations, but share the ownership of the resources to be distributed. Examples might include seats in public schools, faculty offices, and time slots in public tennis courts. Given an allocation, groups of agents who would prefer an alternative allocation might challenge it. An assignment is popular if it is not challenged by another one. By assuming that agents' ability to challenge allocations can be represented by weighted votes, we characterize the conditions under which popular allocations might exist and when these can be implemented via strategy-proof mechanisms. Serial dictatorships that use orderings consistent with the agents' weights are not only strategy-proof and Pareto efficient, but also popular, whenever these assignments exist. We also provide a new characterization for serial dictatorships as the only mechanisms that are popular, strategy-proof, non-wasteful, and satisfy a consistency condition.\\
\textit{JEL Classification: C78, D47, D63, D74}\\
\textit{Keywords: object allocation, voting, strategy-proofness, popular matching.}
\end{abstract}

\section{Introduction}
\label{sec:Introduction}

Consider the problem of assigning a finite set of indivisible objects to a finite set of agents who share a loosely defined ``common ownership'' of the objects, but only care about the object that is assigned to themselves. Examples of problems like these include the assignment of offices to faculty in a department, time slots in public tennis courts, or seats in public schools.\footnote{While it might be argued that in these problems individuals have preferences not only about their own allocation but also about their peers in schools or office neighbors, we take the standard approach of assuming that individuals are indifferent between any two allocations in which their assignment does not change.} In these settings, standard properties such as Pareto efficiency or the core might not correctly capture the nature of the conflict created by these preferences combined with the shared ownership of the objects.

Take Pareto efficiency. An allocation of objects to agents is Pareto efficient if there is no alternative allocation in which no agent is worse off and at least one is strictly better off.  When this is the objective of the designer, there are many mechanisms with good outcome and incentive properties available \citep{Shapley1974-ir,Abdulkadiroglu2003-nx,Pycia2017-xe}. Many allocations, however, are Pareto efficient because to improve the allocation of many agents it would be necessary to make one or a few agents worse off.\footnote{Consider, for example, a problem consisting of an arbitrarily large number of agents, $n$, and the same number of unit-copy objects. Let $i_1,..,i_n$ and $a_1,..,a_n$ be enumerations of the agents and the objects, respectively. The earlier an object comes, the more preferred it is by all the agents. Consider an allocation $\mu$ where each agent $i_k$ is given the object $a_k$. This would be, for example, the outcome of a serial dictatorship in which agents are ordered by their indexes. This allocation is Pareto efficient. Notice, however, that under an alternative matching $\mu'$ in which $i_1$ is assigned $a_n$, and each agent $i_k$ to $a_{k-1}$, all the agents but $i_1$ would be strictly better off.} When all the agents share the ownership of the objects, however, such allocations could be challenged by those who would be better-off under an alternative allocation. The concept of \textit{popularity} considers these issues. A matching $\mu$ is \textbf{more popular} than matching $\mu'$ if the number of agents who strictly prefer their assignments under $\mu$ to those under $\mu'$ is larger than the number of agents who strictly prefer their assignments under $\mu'$ to those  under $\mu$. Here we would say that $\mu'$ is challenged by $\mu$. A matching is \textbf{popular} if there is no matching that is more popular than it. Notice that, similarly to the test of whether an allocation is in the core, we test whether coalitions of agents could reallocate the objects that they own in a way that makes them better off. Differently from the core, however, agents are not endowed with part of the objects to be allocated. Moreover, we consider not only whether the agents in the coalition would be better off, but also how many outside of that coalition would be worse off. In our setup, therefore, no subset of agents has an endowment of some set of objects to reallocate between themselves, and every agent has, to some extent, a say on the allocation of all objects.

When considering the standard notion of popularity, we hit strong negative results. Popular matchings might not exist (Corollary \ref{cor:popMatchingMayNotExist})---a previously known fact---and there is no mechanism that is strategy-proof and produces a popular matching when one exists, regardless of the allocations that it produces when those do not exist (Corollary \ref{cor:noSPPopularMech}). In other words, even if we ignore the fact that popular matchings might not exist, there is no ``good'' method of implementing them when they do exist.

We then use a more general notion of popularity, that allows agents to have different weights. A matching $\mu$ is \textbf{more w-popular} than matching $\mu'$ if the sum of the weights of the agents who strictly prefer their assignments under $\mu$ to those under $\mu'$ is larger than that of the agents who strictly prefer their assignments under $\mu'$ to those  under $\mu$. Group decision involving weighted voting similar to the one implied by this notion of popularity is common. Examples include stockholder voting in corporations, and elections for university deans \citep{Lucas1983}. Moreover, in many real-life environments, the idea that participants do not have the same ability to influence the allocation is natural: senior faculty have a bigger influence than juniors, newcomers have less influence than veterans, etc. Even with this more general model, however, the existence of a w-popular matching is hard to guarantee: unless the weights are \textit{cumulatively ordered}, a very restrictive condition, w-popular matchings might still not exist (Proposition \ref{prop:non-existence of an popular matching}).

Our approach, therefore, will be to require maximum compliance with our objective. We say that a mechanism is w-popular if it produces w-popular allocations \textit{whenever they exist}. The space of w-popular mechanisms includes every possible combination of allocations for problems in which a w-popular matching does not exist. For example, every mechanism that produces a w-popular matching when one exists and satisfies any 
``second-best'' property when one does not exist, is w-popular.

Although the space of w-popular mechanisms is extensive, we obtain a series of well-defined results. We characterize the profiles of weights of the agents that allow for w-popular and strategy-proof mechanisms to exist and show that when they do, serial dictatorships (SDs) in which the order of the agents is consistent with their weights are w-popular and strategy-proof. In addition, the set of outcomes of SDs consistent with the weights fully characterizes the set of w-popular allocations (Theorem \ref{thm:MainResultsBeforeCharacterization}). Since SDs are Pareto efficient, these mechanisms are as well. We show that these profiles of weights also characterize the scenarios in which w-popular matchings can be implemented in Nash equilibrium (Theorem \ref{thm:no-pop-in-equilibrium}).   

When the profile of weights allows for w-popular and strategy-proof mechanisms to exist, we obtain a characterization for SD: a mechanism is strategy-proof, non-wasteful, w-popular and preserves dispute resolutions (a weak consistency requirement) if and only if it is an SD consistent with the weights of the participants (Theorem \ref{thm:SDCharacterization}).

These results provide, therefore, a new case for the use of SDs in problems of allocation of objects with common ownership. Whenever the problem involves weights for the participants, either by its nature or by design, and if these weights allow for a strategy-proof and w-popular mechanism to exist,  then SDs are not only strategy-proof and Pareto efficient but also w-popular. Any other mechanism that is also strategy-proof and w-popular would have undesirable inconsistency or wasteful properties.

\subsection{Related literature}
\label{subsec:RelatedLit}

Our paper is related to, among other subjects, the implementation of allocations immune to coalitional deviations, the properties and existence of popular matchings, weighted voting procedures, and characterizations of serial dictatorships.

In our setup, some groups of agents might challenge an entire allocation with another one. \cite{Moulin1982-qc} propose the notion of \textit{effective coalitions}, a generalization of the concept of core in which coalitions of agents might be able to ``force'' the final decision between subsets of allocations. This concept, however, does not generalize or is generalized by the challenges that we describe. \cite{RubinsteinJungle2007} describe equilibria in an exchange economy in which an individual with more ``power'' than the other can take away her goods. They show that equilibria in allocation problems similar to ours can be constructed via an SD that follows agents' ``power''. Among other differences, their setup does not consider coalitional actions or implementation questions.

The concept of popular matching was introduced by \cite{Gardenfors1975-hs} in the context of a one-to-one marriage market.The author shows that in that context, when both sides have strict preferences, popular matchings exist and are Pareto efficient.

For the one-sided problem of matching agents to objects, \cite{Abraham2007-iq} provide a characterization for the set of popular matchings, and an algorithm for finding one or determining that one does not exist. Other papers extend these results with other algorithms, such as \cite{McDermid2009-gp,Manlove2006-jx}. \cite{kondratev2021minimal} provide a characterization of popular matchings in terms of majority-based 3-way exchanges, and establish a relationship between popularity and minimization of envy. A key concept that we will use is that of \textit{weighted popular matchings}, considered by \cite{mestre_weighted_2014} for the case where objects have unit capacity, and \cite{Sng2010-uk} for when there might be multiple units of the objects. Both provide algorithms for finding weighted popular matchings or determining that they do not exist. \cite{itoh_weighted_2010} establishes values for the probability that a random matching is weighted popular when only two weight values are possible. \cite{cseh2017popular} provides a review on the subject of popular matchings.

When it comes to the evaluation of incentives when implementing popular matchings, the literature is scarce. \cite{Nasre2014-le} evaluates the incentive to misrepresent preferences when facing a mechanism that produces popular matchings. The author finds that there are instances in which agents can improve their outcome by misrepresenting their preferences. The result is, however, limited by the fact that she assumes that a popular matching always exists, regardless of the manipulations made by the agents. \cite{Aziz2013-rl} consider a notion of popularity applied to random assignments, which, as opposed to the deterministic version, has existence guaranteed. The authors show that popular matchings are Pareto efficient and that there are popular random assignments that satisfy notions of equal treatment of equals and envy-freeness. Among other results, they show that when there are more than two agents, there is no ``strategy-proof'' and popular mechanism. Since popularity can always be satisfied in their framework, and their use of randomization and expected utility, these results do not imply ours.

Even though we tackle the problem from a matching perspective, this study is conceptually related to voting.  Specifically, the notion of w-popularity is essentially counterpart of weighted voting in a matching framework. Among many others, \cite{Yoram} consider a weighted voting game where they show that agents can benefit by creating virtual agents and splitting their weights. In \cite{meir}'s strategic weighted voting formulation,  agents repeatedly  change their votes upon observing the others'  votes. They study how convergence to a Nash equilibrium in the game depends on various attributes, including the agents' weights.

Finally, this paper also makes contributions to the works exploring the properties of SD, and its characterizations. For settings, like ours, in which agents may consume at most one unit of an object, \cite{Svensson1994-ur} shows that truth-telling is a dominant strategy in an SD and that every Pareto efficient allocation is the result of an SD for some ordering. \cite{sven} provides a characterization of an SD showing that it is the only strategy-proof, non-bossy and neutral mechanism. \cite{afa} shows that an SD is the only rule that, when used repeatedly, satisfies non-wastefulness, robust strategy-proofness, and some other consistency properties. When considering agents with multi-unit demand, there are multiple results characterizing an SD in these domains, such as \cite{Papai2001-xr}, \cite{hat}, \cite{Ehlers2003-hd}, \cite{Klaus2002-ir}, among others. 

\section{Model}
\label{sec:Model}

Let $N$ and $O$ be the finite sets of agents and objects, respectively. Each agent $i\in N$ has a strict preference ordering $P_{i}$ over $O$ and the outside option of being unassigned, denoted by $\emptyset$. Let $P=(P_i)_{i\in N}$.  Object $c$ is \textbf{acceptable} to agent $i$ if $c \mathrel{P_i} \emptyset$, and otherwise, it is \textbf{unacceptable}. We write $R_i$ for the ``at-least-as-good-as" relation. Let $q=(q_{a})_{a\in O}$ be the capacity profile of the objects. Let $\mathcal{P}$ be the set of all strict preferences.

In the rest of the paper, we fix all the elements except the agent preferences and object quotas and simply write  $(P,q)$ for the problem. We assume that $n=|N|\geq 3$ and $m=|O|\geq n$. All the omitted proofs from the main body are relegated to the Appendix.

A \textbf{matching} $\mu$ is an assignment of objects to agents such that no agent obtains more than one object, and no object is assigned to more agents than its capacity. We write $\mu_k$ for the assignment of agent (or object) $k\in N\cup O$ under the matching $\mu$. A matching $\mu$ is \textbf{non-wasteful} if there is no agent-object pair $(i,a)$ such that  $|\mu_a| < q_a$ and $a \mathrel{P_i} \mu_i$. A matching $\mu'$ \textbf{Pareto improves} upon $\mu$ if for every $i\in N$, $\mu'_i \mathrel{R_i} \mu_i$ and for some $j\in N$, $\mu'_j \mathrel{P_j} \mu_j$. A matching $\mu$ is \textbf{Pareto efficient} if no matching Pareto improves upon it. A \textbf{mechanism} $\psi$ is a function that produces a matching for each problem.  We write $\psi(P,q)$ to denote its outcome in problem $(P,q)$. A mechanism $\psi$ is $[$ \emph{non-wasteful, Pareto efficient} $]$ if for each problem $(P,q)$, $\psi(P,q)$ is $[$ non-wasteful, Pareto efficient $]$. 

\begin{defn}
A matching $\mu$ is \textbf{more popular} than matching $\mu'$ if $\left|\{i\in N:\ \mu_i \mathrel{P_i} \mu'_i\}\right| > \left|\{i\in N:\ \mu'_i \mathrel{P_i} \mu_i\}\right|$. A matching is \textbf{popular} if no matching is more popular than it. A mechanism is \emph{popular} if its outcome is always  popular whenever a popular matching exists.
\end{defn}

When considering this notion of popularity, a popular matching might not exist. Moreover, no mechanism is strategy-proof and  popular (Corollaries \ref{cor:popMatchingMayNotExist} and \ref{cor:noSPPopularMech} below). We will, therefore, relax the definition of popularity in what follows.

\section{Weights and Popularity}
\label{sec:WeightsAndPopularity}

We now assume that there is a \textbf{weight profile} $w$, that associates each agent  $i\in N$ with a positive \textbf{weight} $w_i >0$. Moreover, let $N=\{i_1,..,i_n\}$ be an enumeration of the agents such that $w_{i_k} \geq w_{i_{k'}}$ for each $k < k'$. Throughout the rest of the paper, unless otherwise stated, we fix this ordering and use it whenever mentioned.

\begin{defn}
A matching $\mu$ is \textbf{more w-popular} than matching $\mu'$ if $\sum_{i\in N:\ \mu_{i} \mathrel{P_{i}} \mu'_{i}}\omega_{i}>\sum_{j\in N:\ \mu'_{j} \mathrel{P_{j}} \mu_{j}} w_{j}$. A matching is \textbf{w-popular} if no matching is more w-popular than it. A mechanism is \emph{w-popular} if its outcome is always w-popular whenever a w-popular matching exists. 
\end{defn}

Since Pareto improvements always yield  more w-popular matchings than initial allocations, one fact that comes immediately from the definition of w-popular matchings is given in the following.

\begin{fact}
Every w-popular matching is Pareto efficient. 
\end{fact}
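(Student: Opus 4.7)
The plan is to prove the fact by contraposition: assume $\mu$ is not Pareto efficient, and show that $\mu$ cannot be w-popular. Concretely, I would suppose there exists a matching $\mu'$ that Pareto improves upon $\mu$, and then demonstrate that $\mu'$ is strictly more w-popular than $\mu$ in the sense of the definition just given.

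The key step is to evaluate the two sums in the definition of ``more w-popular.'' By the definition of Pareto improvement, every agent $i \in N$ satisfies $\mu'_i \mathrel{R_i} \mu_i$, so the set $\{i \in N : \mu_i \mathrel{P_i} \mu'_i\}$ is empty; hence $\sum_{i \in N :\ \mu_i P_i \mu'_i} w_i = 0$. On the other hand, there exists at least one agent $j$ with $\mu'_j \mathrel{P_j} \mu_j$, and since every weight is strictly positive ($w_i > 0$ for all $i$), we get $\sum_{i \in N :\ \mu'_i P_i \mu_i} w_i \geq w_j > 0$. Combining these two inequalities yields that $\mu'$ is more w-popular than $\mu$, contradicting w-popularity of $\mu$.

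There is no real obstacle here: the result is essentially immediate from the fact that every weight is strictly positive, so that a Pareto improvement always produces a strict gain on the ``supporters'' side against an empty ``opponents'' side. The only thing one has to be careful about is making sure that the strict positivity of weights is invoked explicitly, since if some agents were allowed zero weight the argument would fail when the lone improved agent happens to be one of them.
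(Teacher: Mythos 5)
Your argument is correct and matches the paper's own (one-line) justification: a Pareto improvement makes the opposing set empty while the improving set has strictly positive total weight, so it is more w-popular, contradicting w-popularity. Your explicit remark about the role of strictly positive weights is a nice touch but the route is the same.
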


The following condition on weights will have an important role in some of our results. 

\begin{defn}\label{def:sequentiallyCumulativeWeights}
A weight profile $w$ is \textbf{cumulatively ordered} if for each $i_j\in N$, $w_{i_j} \geq \sum_{k > j} w_{i_k}$
\end{defn}

If $w$ is  not cumulatively ordered, then there exists a sequence of the weights $w_{i_k} \geq ...\geq w_{i_{k'}}$ such that $w_{i_k} < \sum_{k < j \leq k'} w_{i_j}$.

\begin{prop}\label{prop:non-existence of an popular matching}
If the weights are cumulatively ordered, then there always exists a w-popular matching. Otherwise, there exists a problem in which no w-popular matching exists.
\end{prop}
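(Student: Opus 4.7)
The plan is to prove the two directions separately. For the existence direction, I would show that the serial dictatorship $\mu^{SD}$ following the ordering $i_1, i_2, \ldots, i_n$ is always w-popular. Let $\mu \neq \mu^{SD}$ be an arbitrary alternative matching, and let $j^*$ be the smallest index with $\mu_{i_{j^*}} \neq \mu^{SD}_{i_{j^*}}$. By minimality, the agents $i_1, \ldots, i_{j^*-1}$ receive identical assignments in both matchings, so the set of objects with remaining capacity at SD step $j^*$ is the same in both. In particular, $\mu_{i_{j^*}}$ is available to $i_{j^*}$ at that step, and since $\mu^{SD}_{i_{j^*}}$ is her top available choice (including the option $\emptyset$) and differs from $\mu_{i_{j^*}}$, one obtains $\mu^{SD}_{i_{j^*}} \mathrel{P_{i_{j^*}}} \mu_{i_{j^*}}$. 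The weight in favor of $\mu^{SD}$ is therefore at least $w_{i_{j^*}}$, whereas the weight in favor of $\mu$ is at most $\sum_{k > j^*} w_{i_k} \leq w_{i_{j^*}}$ by the cumulative ordering assumption, so $\mu$ cannot be more w-popular than $\mu^{SD}$.

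For the non-existence direction, suppose $w$ is not cumulatively ordered and pick the smallest index $j$ with $w_{i_j} < \sum_{k > j} w_{i_k}$. Weight monotonicity forces $j \leq n-2$, so the set $S = \{i_j, i_{j+1}, \ldots, i_n\}$ contains at least three agents and $m := n - j \geq 2$. I would construct the problem so that $m$ objects $a_1, \ldots, a_m$ of unit capacity are acceptable only to agents in $S$, with every agent in $S$ sharing the common preference ordering $a_1, a_2, \ldots, a_m, \emptyset$. Each agent outside $S$ is given a unique dedicated object acceptable only to her, and any further objects demanded by $|O| \geq n$ are declared unacceptable to all agents, so they play no role.

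Within this problem, any non-Pareto-efficient matching is beaten by any of its Pareto improvements. For a Pareto-efficient matching $\mu$, the outside agents take their dedicated objects and the $m$ objects $a_1, \ldots, a_m$ are assigned bijectively to $m$ agents of $S$, leaving exactly one agent $s \in S$ unassigned. Letting $i_{\ell^*}$ denote the agent receiving $a_1$ under $\mu$, I would define the shifted matching $\mu'$ by reassigning $a_r$ to the holder of $a_{r+1}$ in $\mu$ for each $r = 1, \ldots, m-1$, giving $a_m$ to $s$, and leaving $i_{\ell^*}$ unassigned. Then $i_{\ell^*}$ is the only agent of $S$ strictly preferring $\mu$, while every other agent of $S$ strictly prefers $\mu'$, so the weight comparison reduces to $w_{i_j} + \sum_{k > j} w_{i_k} - w_{i_{\ell^*}}$ for $\mu'$ against $w_{i_{\ell^*}}$ for $\mu$. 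Combining $w_{i_{\ell^*}} \leq w_{i_j}$ with the hypothesis $\sum_{k > j} w_{i_k} > w_{i_j}$ yields $w_{i_j} + \sum_{k > j} w_{i_k} > 2 w_{i_{\ell^*}}$, so $\mu'$ is strictly more w-popular than $\mu$.

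The main obstacle is that the identity of the agent holding $a_1$ in a candidate w-popular matching is not fixed in advance, so the shift argument must beat every Pareto-efficient matching uniformly. The weight decomposition above handles this, being tight precisely in the worst case $\ell^* = j$, where the required inequality collapses exactly to the defining inequality of non-cumulative ordering at index $j$; in all other cases $w_{i_{\ell^*}} < w_{i_j}$ gives slack to spare.
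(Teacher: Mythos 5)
Your proposal is correct and takes essentially the same route as the paper: the existence half is the identical serial-dictatorship first-divergence argument using $w_{i_{j^*}}\geq\sum_{k>j^*}w_{i_k}$, and the non-existence half is the same cyclic-shift construction at the minimal index $j$ where $w_{i_j}<\sum_{k>j}w_{i_k}$. The only (immaterial) difference is that the paper keeps all agents active with a common preference list and first pins down the assignments of $i_1,\ldots,i_{j-1}$ in any candidate w-popular matching, whereas you neutralize those agents via dedicated objects and unacceptability before applying the same shift-and-weight-decomposition argument to the tail.
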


Proposition \ref{prop:non-existence of an popular matching} is, essentially, a negative result. It shows that unless the weights are cumulatively ordered, a strong condition, there are problems where a popular matching does not exist. That is, allowing agents to have different weights does not, in general, solve the existence problem of popular matchings. It also constitutes a statement of the \textit{Condorcet paradox} for almost every weight profile: unless they are cumulatively ordered, there are profiles of preferences for which agents cycle through matchings that are more w-popular than the previous.

Notice, moreover, that since equal weights are not cumulatively ordered, we obtain the following corollary:

\begin{cor}\label{cor:popMatchingMayNotExist}
A popular matching may not exist.
\end{cor}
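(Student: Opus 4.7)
The plan is to derive this as an immediate specialization of Proposition \ref{prop:non-existence of an popular matching}. The key observation is that the ordinary notion of popularity in the definition corresponds exactly to w-popularity when all agents carry a common positive weight: if $w_i = c$ for every $i \in N$, then $\sum_{i:\mu_i \mathrel{P_i} \mu'_i} w_i > \sum_{j:\mu'_j \mathrel{P_j} \mu_j} w_j$ becomes $c \cdot |\{i:\mu_i \mathrel{P_i} \mu'_i\}| > c \cdot |\{j:\mu'_j \mathrel{P_j} \mu_j\}|$, which upon dividing by $c$ is precisely the condition that $\mu$ is more popular than $\mu'$. Hence a matching is popular if and only if it is w-popular under the uniform weight profile.

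Next I would check that the uniform weight profile fails to be cumulatively ordered whenever $n \geq 3$. Taking the agent $i_1$ of highest weight (here tied with all others), Definition \ref{def:sequentiallyCumulativeWeights} requires $w_{i_1} \geq \sum_{k>1} w_{i_k}$, i.e., $c \geq (n-1)c$, which fails for any $n \geq 3$ under the model's standing assumption. Therefore the uniform weight profile lies in the ``otherwise'' branch of Proposition \ref{prop:non-existence of an popular matching}.

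Invoking that proposition with the uniform weights yields a problem $(P,q)$ for which no w-popular matching exists. By the equivalence noted in the first paragraph, this same problem admits no popular matching, establishing the corollary.

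There is no real obstacle here: the content of the corollary is entirely absorbed into the already-proved Proposition \ref{prop:non-existence of an popular matching}, and the only point to verify carefully is that the standing assumption $n \geq 3$ is exactly what prevents uniform weights from being cumulatively ordered.
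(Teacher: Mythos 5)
Your proof is correct and is essentially the paper's own argument: the paper derives the corollary by observing that equal weights are not cumulatively ordered (given $n\geq 3$) and invoking the negative part of Proposition \ref{prop:non-existence of an popular matching}, exactly as you do. Your explicit verification that uniform weights coincide with ordinary popularity and fail cumulative ordering only for $n\geq 3$ is a welcome but routine elaboration of the same route.
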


The fact that popular matchings might not exist---and therefore w-popular matchings might also not exist for arbitrarily given weights profiles---was already known, since  Corollary \ref{cor:popMatchingMayNotExist} is a widely known fact in the literature on popular matchings. While the existing literature in weighted popular matchings, listed in section \ref{subsec:RelatedLit}, describes algorithms for finding one or determining it does not exist, none of them established conditions on the weights for which they might or not exist. 

To understand why cumulatively ordered weights are central to the existence of w-popular matchings, consider the example below.

\begin{example}
Let $N=\{i_1,i_2,i_3\}$ and $O=\{o_1,o_2,o_3\}$, and consider two weight profiles: $w=(6,3,2)$ and $w'=(4,3,2)$ (the nth value being the weight of agent $i_n$). The weight profile $w$ is, therefore, cumulatively ordered, while $w'$ is not. Under $w$, it is easy to produce a w-popular matching: $i_1$ must get her most preferred object, otherwise she could challenge that matching, since $6>3+2$. Similarly, $i_2$ must get her most preferred remaining object, and $i_3$ the last remaining object, if that is acceptable to her. By construction, no matching is more w-popular, and therefore this constitutes a w-popular matching.

Consider next the weight profile $w'$, and assume that for all $i\in N$, $o_1 \mathrel{P_i} o_2 \mathrel{P_i} o_3 \mathrel{P_i} \emptyset$, and suppose that $\mu$ is a w-popular matching. Clearly, under $\mu$ every object is matched to an agent, otherwise a matching that matches the unmatched object(s) to the unmatched agent(s) would be more w-popular. Let $i$ and $i'$ be such that $\mu_i=o_1$ and $\mu_{i'}=o_2$, and therefore $\mu_{i''}=o_3$. Regardless of the identities of $i$, $i'$ and $i''$, the matching $\mu'$, in which $\mu'_{i'}=o_1$ and $\mu'_{i''}=o_2$ is more w-popular than $\mu$, contradicting the assumption that $\mu$ was w-popular.
\end{example}

Whenever a weight profile is not cumulatively ordered, there are problems in which three agents might cycle through matchings that are more w-popular, as in the example above, because any coalition of two agents within them can reshuffle the allocation in a way that makes them better off. 

\section{Implementation}

In this section, we evaluate the possibility of implementing w-popular matchings using direct mechanisms. A mechanism $\psi$ is \textbf{strategy-proof} if there are no problem $P$ and agent $i$ with false preferences $P'_i$ such that $\psi_i(P'_{i}, P_{-i},q) \mathrel{P_i} \psi_i(P,q)$.\footnote{$P_{-i}$ is the preference profile of all the agents but agent $i$.}  For a given ordering of agents $(i_1,i_2,\ldots,i_n)$, \textbf{SD (serial dictatorship) mechanism} follows the ordering of agents, giving each one the most preferred acceptable object that was not yet assigned previously. We say that a $SD$ is \textbf{consistent with the agent's weight profile} if for any pair of agents $i,j\in N$ with $w_i>w_j$,  agent $i$ comes before agent $j$ in the ordering that is used in the $SD$. Notice that whenever agents have the same weight, then any ordering among themselves complies with consistency.
We say that a weight profile is \textbf{distinct} if no weight is the same as some other. A weight profile $w$ is \textbf{essentially distinct} if for all $j<k\leq n-1$, $w_{i_j}\neq w_{i_k}$, but $w_{i_{n-1}}=w_{i_n}$ and $w_{i_{n-2}} \geq w_{i_{n-1}}+w_{i_n}$. The following fact directly comes from the definitions.

\begin{fact}
Cumulatively ordered weight profiles are either distinct or essentially distinct.
\end{fact}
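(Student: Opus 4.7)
The plan is to argue by contradiction: assume $w$ is cumulatively ordered but not distinct, and show that the only way this can happen is the essentially distinct configuration. The key leverage will be that the weights are already enumerated in non-increasing order $w_{i_1}\geq w_{i_2}\geq\cdots\geq w_{i_n}$, combined with the strict positivity of weights.

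First I would reduce any equality to a consecutive equality. If some $j<k$ satisfy $w_{i_j}=w_{i_k}$, then by non-increasingness every intermediate weight equals them, so in particular $w_{i_j}=w_{i_{j+1}}$. I will then apply the cumulative ordering condition at index $j$:
\[
w_{i_j}\;\geq\;\sum_{l>j}w_{i_l}\;=\;w_{i_{j+1}}+\sum_{l>j+1}w_{i_l}\;=\;w_{i_j}+\sum_{l>j+1}w_{i_l}.
\]
This forces $\sum_{l>j+1}w_{i_l}\leq 0$. Since every $w_{i_l}>0$, the index set $\{l:l>j+1\}$ must be empty, giving $j+1=n$, so $j=n-1$ and hence also $k=n$. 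Therefore the only admissible coincidence among the weights is $w_{i_{n-1}}=w_{i_n}$, which in turn forces $w_{i_1},\ldots,w_{i_{n-1}}$ to be pairwise distinct (any earlier coincidence would, by the same argument, force two equal consecutive weights strictly before position $n-1$, which the calculation above rules out).

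Finally, the remaining clause $w_{i_{n-2}}\geq w_{i_{n-1}}+w_{i_n}$ in the definition of essentially distinct is just the cumulative ordering condition instantiated at $j=n-2$, so no further work is needed. Putting this together: either no coincidence occurs, in which case $w$ is distinct; or exactly one coincidence $w_{i_{n-1}}=w_{i_n}$ occurs, with the earlier weights strictly decreasing and with $w_{i_{n-2}}\geq w_{i_{n-1}}+w_{i_n}$ automatic, which is precisely being essentially distinct.

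There is no real obstacle here; the only care needed is to observe that the equality $w_{i_j}=w_{i_{j+1}}$ combined with cumulative ordering at $j$ squeezes the remaining tail of positive weights to zero, which can only happen if that tail is empty. This single observation does all the work.
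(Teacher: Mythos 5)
Your argument is correct and is exactly the routine verification the paper leaves implicit when it says the fact ``directly comes from the definitions'': a repeated weight at positions $j<j+1$ together with the cumulative condition at $j$ forces the tail after $j+1$ to be empty, so any coincidence can only be $w_{i_{n-1}}=w_{i_n}$, and the clause $w_{i_{n-2}}\geq w_{i_{n-1}}+w_{i_n}$ is the cumulative condition at $n-2$. No gaps; nothing further is needed.
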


The theorem below establishes the space of weight profiles that allow for strategy-proof w-popular mechanisms, and the intimate relationship between w-popularity and SDs.

\begin{thm}\label{thm:MainResultsBeforeCharacterization}

\leavevmode
\makeatletter
\makeatother
\begin{itemize}
\item[(i)] A $w$-popular and strategy-proof mechanism exists if and only if the weight profiles are either distinct or essentially distinct.
\item[(ii)] If a weight profile is distinct or essentially distinct, then the $SD$s that are consistent with the weight profile are  $w$-popular and strategy-proof.
\item[(iii)] At each problem where a $w$-popular matching exists, any $w$-popular matching is the outcome of a $SD$  that is consistent with the weight profile. 
\end{itemize}
\end{thm}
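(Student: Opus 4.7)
The plan is to prove parts (ii) and (iii) first by induction on $n$, derive the ``if'' direction of (i) from (ii), and treat the ``only if'' direction of (i) with a separate manipulation argument. The main difficulty lies in this last step, which must construct profiles where a w-popular matching exists but cannot be delivered consistently by any strategy-proof rule.

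For (ii), strategy-proofness of any SD is standard. For w-popularity, let $\sigma = (i_1, \ldots, i_n)$ be consistent with $w$ and $\mu^{\sigma}$ its outcome, and compare it with an arbitrary matching $\mu$. If $\mu_{i_1} = \mu^{\sigma}_{i_1}$, restricting to the sub-problem on $\{i_2, \ldots, i_n\}$ (with the capacity of $\mu^{\sigma}_{i_1}$ decreased by one) preserves the distinct or essentially distinct property of the weight profile, so the inductive hypothesis applies. If $\mu_{i_1} \neq \mu^{\sigma}_{i_1}$, then $i_1$ strictly prefers $\mu^{\sigma}$ because $\mu^{\sigma}_{i_1}$ is her top acceptable object, and cumulative ordering (implied by distinct or essentially distinct) yields $\sum_{j \geq 2} w_{i_j} \leq w_{i_1}$, so the total weight of agents preferring $\mu$ cannot strictly exceed the weight of those preferring $\mu^{\sigma}$.

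For (iii), I would induct on $n$ as well. Given a w-popular $\mu$, first show $\mu_{i_1}$ is $i_1$'s top acceptable object. Suppose not, and let $o$ be that object. Because $\mu$ is w-popular it is also non-wasteful, so $o$ is fully assigned; pick $j \neq i_1$ with $o$ in $j$'s assignment and let $\mu'$ be $\mu$ with $i_1$ and $j$'s assignments swapped. Only $i_1$ (who strictly prefers $\mu'$) and $j$ are affected, and regardless of $j$'s ranking of the two relevant objects the weight of agents preferring $\mu$ is at most $w_j < w_{i_1}$ (the strict inequality follows from $w$ being distinct or essentially distinct together with $n \geq 3$). Hence $\mu'$ is strictly more w-popular, a contradiction. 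Next, the restriction of $\mu$ to $N \setminus \{i_1\}$ is w-popular in the reduced sub-problem, since any strictly more w-popular matching there, extended by fixing $\mu_{i_1}$, would contradict w-popularity of $\mu$. The inductive hypothesis then expresses $\mu$ as the outcome of an SD consistent with $w$.

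The ``if'' direction of (i) is immediate from (ii) combined with the stated Fact equating cumulative ordering with being distinct or essentially distinct. For ``only if'', I would prove the contrapositive: if $w$ is neither distinct nor essentially distinct, so that cumulative ordering fails, then no strategy-proof w-popular mechanism exists. The plan is to isolate a triple $\{a, b, c\}$ at which cumulative ordering first fails (so $w_a < w_b + w_c$), assign every agent outside this triple a distinct top object that does not compete with three objects $\{o_1, o_2, o_3\}$ used by the triple, and consider two preference profiles $P$ and $P'$ that differ only in one triple agent's reported preference over $\{o_1, o_2, o_3\}$. I would argue that a w-popular matching exists under both profiles and, by (iii), is pinned down to an SD outcome consistent with $w$, but that the two w-popular matchings must differ in the deviating agent's assignment in a way that violates strategy-proofness under her true preferences. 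The hardest step is tuning the preferences on the triple so that the ``Condorcet-style'' cycling highlighted in the Example simultaneously produces a unique w-popular matching in each profile and rotates it appropriately between the two; without this delicate choice, the w-popular matching may coincide across the two profiles and eliminate the manipulation incentive.
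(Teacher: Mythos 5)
There are two genuine gaps. First, your proof of part (ii) rests on the claim that ``distinct or essentially distinct'' implies cumulatively ordered, which is false: the paper's Fact only states the converse (cumulatively ordered $\Rightarrow$ distinct or essentially distinct). The profile $w=(4,3,2)$ is distinct but not cumulatively ordered, and the paper's Example shows that with identical preferences over three unit-capacity objects no $w$-popular matching exists there; in particular the consistent SD outcome is beaten by another matching, so your unconditional inductive claim that $\mu^{\sigma}$ ties or beats every matching $\mu$ cannot be true. The statement to prove is conditional: the SD outcome must be $w$-popular only at problems where \emph{some} $w$-popular matching exists. The paper handles this by first showing (as in your part (iii) argument) that any $w$-popular matching must give agents their top remaining choices in weight order, so that whenever a $w$-popular matching exists it coincides with a consistent-SD outcome; (ii) then follows, rather than being established by a domination argument. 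Your part (iii) induction is essentially sound, but note it uses $w_j<w_{i_1}$ and therefore only covers distinct or essentially distinct profiles, whereas the theorem (and the paper's swap-or-vacancy argument) states (iii) for arbitrary weight profiles.

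Second, the ``only if'' direction of (i) is not proved: you explicitly leave open the construction of the two profiles, and the route you choose---two profiles at which $w$-popular matchings exist, each pinned down uniquely and ``rotating'' against the deviator---is the hard way around and may not be available at all. When the weight profile has ties (e.g.\ equal weights), profiles in which some agent misses her top object typically admit \emph{several} $w$-popular matchings that treat the tied agents symmetrically, so a mechanism retains enough selection freedom to defuse any single such manipulation; pinning down uniqueness while also making the selection rotate is precisely what fails. The paper's argument avoids this entirely: it takes a profile with \emph{no} $w$-popular matching (three agents $i_{k},i_{k+1},i_{k+2}$ with identical preferences $a,b,c$, where no one's weight weakly dominates the sum of the other two), observes that whatever matching $\psi$ outputs there, some agent $j$ among the three has $b \mathrel{P_j} \psi_j(P,q)$, and lets $j$ deviate to reporting only $b$ acceptable; at the deviated profile $w$-popular matchings exist and every one of them assigns $b$ to $j$, so $w$-popularity forces a profitable manipulation. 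Exploiting the non-existence region, where the mechanism is unconstrained but must leave someone unhappy, together with a truncation report is the missing idea in your sketch.
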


Notice that, given the fact above, items (i) and (ii) also hold for cumulatively ordered weight profiles. Notice, moreover, that item (iii) also implies that there is at most one w-popular matching when weight profiles are distinct, and at most two w-popular matchings when they are essentially distinct.  When considering the standard notion of popularity, Theorem \ref{thm:MainResultsBeforeCharacterization} implies the following:

\begin{cor}\label{cor:noSPPopularMech}
No mechanism is popular and strategy-proof.
\end{cor}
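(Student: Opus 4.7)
The plan is to obtain the corollary directly from Theorem \ref{thm:MainResultsBeforeCharacterization}(i) by observing that standard popularity is exactly the special case of $w$-popularity in which every agent carries the same weight. Concretely, if we set $w_i = 1$ for every $i \in N$, then for any two matchings $\mu, \mu'$,
\[
\sum_{i:\, \mu_i \mathrel{P_i} \mu'_i} w_i \;=\; \bigl|\{i \in N : \mu_i \mathrel{P_i} \mu'_i\}\bigr|,
\]
so ``more $w$-popular'' reduces to ``more popular,'' and the induced notions of a $w$-popular matching and a $w$-popular mechanism coincide with the popular matching and popular mechanism of the previous section.

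Next I would verify that this constant weight profile is neither distinct nor essentially distinct. Distinctness fails trivially since $n \geq 3$ and all weights coincide. Essential distinctness requires the top $n-1$ weights to be pairwise distinct; since $n \geq 3$, there are at least two agents among $i_1, \ldots, i_{n-1}$ whose weights would have to differ, which is impossible when every weight equals $1$.

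Applying Theorem \ref{thm:MainResultsBeforeCharacterization}(i) with this weight profile then yields that no strategy-proof and $w$-popular mechanism exists, which by the equivalence above means that no strategy-proof and popular mechanism exists. The main (and only) subtlety here is making the reduction from popularity to $w$-popularity explicit; once that identification is made, the corollary is an immediate instantiation of the theorem, so there is no serious obstacle to overcome.
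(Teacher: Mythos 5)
Your proposal is correct and follows exactly the route the paper intends: the corollary is stated as an immediate consequence of Theorem \ref{thm:MainResultsBeforeCharacterization}(i), obtained by viewing standard popularity as $w$-popularity under the constant weight profile, which (given $n\geq 3$) is neither distinct nor essentially distinct. Your explicit verification of the reduction and of the failure of both weight conditions is precisely the paper's (unwritten) argument.
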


While Theorem $1$ shows that $SD$ is $w$-popular and strategy-proof, there might be other such rules. To pin down $SD$ in this class of mechanism, we provide a new characterization of $SD$, based on $w$-popularity and strategy-proofness. Below first introduces the characterization axioms.

For an object $a$, let $\mathcal{P}_a$ be the class of preferences where the only acceptable object is object $a$. Likewise, we write $\mathcal{P}_{\emptyset}$ for the class of preferences where no object is acceptable. 

A mechanism $\psi$  \textbf{preserves dispute resolutions}  whenever for a problem $(P,q)$, a pair of agents $i,j$,  $P'_i\in \mathcal{P}_a$,  and object $a$ such that $a=\psi_j(P,q) \mathrel{P_i} \psi_i(P,q)$ and $\psi_i(P'_i, P_{-i}, q)=\emptyset$,  $\psi_i(P'', q')=\emptyset$ and $\psi_j(P'', q')=a$ for each $P''$ and $q'$ where $q'_a=1$, $P''_i, P''_j\in \mathcal{P}_a$, and $P''_k\in \mathcal{P}_{\emptyset}$ for each other agent $k$.\footnote{\cite{bettina} characterize the class of deferred-acceptance mechanisms under responsive priorities. One of their axioms is ``two-agent consistent conflict resolution." While it shares some similarity with preserving dispute resolution, these two axioms are independent. Formally speaking, a mechanism satisfies two-agent consistent conflict resolution if, for each pair of agents $i,j$, object $a$,  capacities $q$, $q'$, and preference profile $P$, where $P_i, P_j\in \mathcal{P}_a$, $\{\psi_i(P,q), \psi_j(P, q')\}=\{\psi_i(P, q'), \psi_j(P, q')\}=\{a, \emptyset\}$, then for each $k\in \{i,j\}$, $\psi_k(P,q)=\psi_k(P,q')$. As preserving dispute resolutions has a say only when all the agents except $i,j$ find no object acceptable, it does not imply two-agent consistent conflict resolution. Similarly, two-agent consistent conflict resolution kicks in whenever both agents $i,j$ have the same and unique acceptable object, hence it does not   imply preserving dispute resolutions either.}

In words, if an agent $i$ envies agent $j$ for her object $a$ and the former continues not receiving object $a$ even after she declares it as her only acceptable object, then a mechanism that preserves dispute resolutions will still make that allocation of $a$ when the problem is reduced to the allocation of a single copy of $a$ between $i$ and $j$. It is, therefore, a relatively weak consistency property for deterministic mechanisms, satisfied, for example, by mechanisms that produce stable allocations \citep{Gale1962-mn}.

\begin{thm}\label{thm:SDCharacterization}
Suppose that the weight profile is distinct or essentially distinct. Then,  a mechanism is non-wasteful, w-popular, strategy-proof, and preserves dispute resolutions if and only if it is a $SD$ that is consistent with the weight profile. 
\end{thm}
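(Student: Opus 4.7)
The proof has two directions. For the easy direction, I would check that an $SD$ consistent with the weights satisfies the four axioms. Non-wastefulness and strategy-proofness are standard for $SD$, and $w$-popularity is part of Theorem~\ref{thm:MainResultsBeforeCharacterization}(ii). For preserving dispute resolutions, if $a = SD_j(P,q) \mathrel{P_i} SD_i(P,q)$, then $j$ must come before $i$ in the consistent ordering, because otherwise $a$ would have been available at $i$'s turn. Re-reporting $P'_i \in \mathcal{P}_a$ does not affect the picks of earlier agents, so $a$ is still unavailable when $i$'s turn arrives and she picks $\emptyset$. In the reduced problem $(P'', q')$, every agent other than $i$ and $j$ has $\mathcal{P}_\emptyset$-preferences and picks $\emptyset$; when $j$'s turn comes she picks $a$, and at $i$'s later turn she picks $\emptyset$.

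For the hard direction, the central step is a ``higher-weight no-envy'' lemma: \emph{if $w_i > w_j$ then $\psi_i(P,q) \mathrel{R_i} \psi_j(P,q)$ at every problem $(P,q)$}. To prove it, suppose $i$ envies $j$ for $a = \psi_j(P,q)$. Strategy-proofness forces $\psi_i(P'_i, P_{-i}, q) \neq a$ when $P'_i \in \mathcal{P}_a$, non-wastefulness forces the $q_a$ copies of $a$ to be fully assigned, and a short additional argument combining strategy-proofness with further manipulations shows $\psi_i(P'_i, P_{-i}, q) = \emptyset$. Preserving dispute resolutions then yields $\psi_j(P'', q') = a$ and $\psi_i(P'', q') = \emptyset$ in the two-agent reduced problem. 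But because $w_i > w_j$, the matching that swaps these two assignments is $w$-popular in that reduced problem, contradicting the $w$-popularity of $\psi$.

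Using the lemma, I would induct on $k \in \{1,\dots,n\}$ to show $\psi_{i_k}(P,q) = SD_{i_k}(P,q)$ for an $SD$ consistent with the weights. For $k=1$, $i_1$ has the highest weight so she cannot envy anyone; combined with non-wastefulness, she must receive her top acceptable object with positive capacity, matching her $SD$ pick. For the inductive step under distinct weights, assume the identity for all $j<k$ and let $a^* = SD_{i_k}(P,q)$. If $\psi_{i_k}(P,q) \neq a^*$, non-wastefulness forces $a^*$ to be fully assigned; the induction hypothesis accounts for strictly fewer than $q_{a^*}$ copies among $\{i_1,\ldots,i_{k-1}\}$, so some $j \in \{i_{k+1},\ldots,i_n\}$ holds $a^*$, which creates envy from $i_k$ (higher weight) toward $j$ (strictly lower weight) --- contradicting the key lemma. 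For essentially distinct weights, the induction reaches $k=n-2$; the two remaining objects are then split between $i_{n-1}$ and $i_n$, and one checks using Pareto efficiency (from $w$-popularity when applicable) together with strategy-proofness and preserving dispute resolutions that the outcome agrees with one of the two $SD$s consistent with the weights.

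The main obstacle I anticipate is twofold. First, the key lemma requires ruling out the case $\psi_i(P'_i, P_{-i}, q) = b$ for some unacceptable $b \in O \setminus \{a\}$, since non-wastefulness alone does not preclude assignments that an agent finds worse than $\emptyset$; showing that the hypotheses collectively imply the needed individual-rationality-type behavior on $i$'s manipulated report is delicate and must be established before preserving dispute resolutions can be invoked. Second, at problems where no $w$-popular matching exists, the $w$-popularity axiom is vacuous, so every inductive step must be carried by strategy-proofness, non-wastefulness, and preserving dispute resolutions alone, extracting contradictions by reducing to auxiliary problems where $w$-popular matchings do exist --- this reduction is the technical heart of the argument.
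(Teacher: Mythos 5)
Your proposal is correct in outline and follows essentially the same route as the paper: your ``higher-weight no-envy'' lemma is exactly the paper's core engine (strategy-proofness forces $\psi_i(P'_i,P_{-i},q)=\emptyset$ for $P'_i\in\mathcal{P}_a$, preserving dispute resolutions transfers this to the two-agent reduced problem with $q'_a=1$, where w-popularity forces the strictly-higher-weight agent to receive $a$), and your induction along the weight ordering mirrors the paper's ``first point of disagreement with the SD'' argument, with the same non-wastefulness step locating a lower-weight agent holding the coveted object. The two obstacles you flag are likewise present, not resolved more deeply, in the paper's own proof: it simply asserts the $=\emptyset$ step from strategy-proofness (implicitly ruling out assignments of unacceptable objects), and it settles the last two agents under essentially distinct weights with a short non-wastefulness/w-popularity argument rather than any additional reduction.
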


Note that here again the statement also holds for cumulatively ordered weight profiles. We show the independence of the axioms in the Appendix.  Theorem \ref{thm:MainResultsBeforeCharacterization} tells us that whenever the weight profile is  distinct or essentially distinct, we can implement w-popular matchings with a strategy-proof mechanism. Below, we find that outside of this weight domain, Nash implementation is impossible as well. Each direct mechanism induces a preference reporting game among agents.  Let $\Omega$ be the set of mechanisms admitting a (pure) Nash equilibrium. For $\psi\in \Omega$, we say that $\psi$ is \textbf{w-popular in equilibrium} if for each problem $(P,q)$ and each Nash equilibrium $P'$ at $(P,q)$, $\psi(P',q)$ is w-popular whenever a w-popular matching exists. 

\begin{thm}\label{thm:no-pop-in-equilibrium}
Let $w$ be a weight profile. $\Omega$ contains a mechanism that is w-popular in equilibrium if and only if the weight profile $w$  is  distinct or essentially distinct.
\end{thm}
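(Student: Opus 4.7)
The plan is to handle the two directions of the biconditional separately, leaning on Theorem~\ref{thm:MainResultsBeforeCharacterization}.

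For the ``if'' direction, suppose the weight profile is distinct or essentially distinct. By Theorem~\ref{thm:MainResultsBeforeCharacterization}(ii), any SD consistent with the weights is both strategy-proof and $w$-popular; call one such mechanism $\psi$. Strategy-proofness makes truthful reporting a weakly dominant best response, hence always a Nash equilibrium, so $\psi\in\Omega$. The central step is to show that every pure Nash equilibrium of $\psi$ produces the same allocation as truthful reporting. I would argue this by induction on the position $k$ in the SD's ordering: the $k$-th picking agent's outcome is a deterministic function only of what the first $k-1$ agents have already selected together with the top acceptable object in her own reported preference, so her unique best-response outcome is her truly most preferred acceptable object among those still available. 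Combined with the inductive hypothesis that the first $k-1$ agents receive their truthful SD assignments in every equilibrium, this yields the claim. Since the truthful SD outcome is $w$-popular whenever a $w$-popular matching exists, every Nash equilibrium of $\psi$ is $w$-popular, proving this direction.

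For the ``only if'' direction, assume the weight profile is neither distinct nor essentially distinct, and suppose for contradiction that some $\psi\in\Omega$ is $w$-popular in equilibrium. The failure of both conditions splits into two sub-cases: (A) there exist indices $j<k\leq n-1$ with $w_{i_j}=w_{i_k}$, or (B) $w_{i_{n-1}}=w_{i_n}$ while $w_{i_{n-2}}<w_{i_{n-1}}+w_{i_n}$. In each case I would construct a pair of preference profiles $(P,q)$ and $(\tilde P,q)$ differing only in a swap of the preferences of two weight-tied agents, chosen so that both problems admit a unique $w$-popular matching and these two matchings disagree precisely on the assignment of the two swapped agents. Because $\psi$ is $w$-popular in equilibrium, any Nash equilibrium $P^{\ast}$ at $(P,q)$ and any Nash equilibrium $\tilde P^{\ast}$ at $(\tilde P,q)$ must implement these two distinct matchings.

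The main obstacle, and where most of the technical work will lie, is that unlike in Theorem~\ref{thm:MainResultsBeforeCharacterization}(i) we cannot invoke strategy-proofness to derive a direct contradiction from a single-agent deviation. To sidestep this, I would choose $(P,q)$ so that (i) the preferences of the non-tied agents are truncated so that their allocations are already pinned down by $w$-popularity, concentrating the undetermined part of the allocation on a two-agent sub-problem between the tied agents, and (ii) after swapping, the same truncated reports of the non-tied agents remain best replies against the tied agents' (possibly new) equilibrium reports. This would reduce the argument to a two-agent sub-problem whose unique $w$-popular matching, combined with the swap symmetry, yields the impossibility. Establishing (ii) is the delicate step, as it requires controlling $\psi$'s response to the tied agents' deviations in a setting where only those two agents have undetermined assignments.
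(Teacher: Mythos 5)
Your ``if'' direction is correct and is exactly the paper's route: Theorem \ref{thm:MainResultsBeforeCharacterization}(ii) plus the observation (Remark \ref{rem:SDEquilibria}, which your induction fleshes out) that every Nash equilibrium of a weight-consistent $SD$ yields the truthful outcome. The gap is in the ``only if'' direction. The paper's impossibility argument pivots on a profile at which \emph{no} w-popular matching exists: the three critical agents $i_k,i_{k+1},i_{k+2}$ (whose weights satisfy $w_{i_k}<w_{i_{k+1}}+w_{i_{k+2}}$ precisely because of the tie) all rank $a_1, a_2, a_3, \emptyset$. At that profile ``w-popular in equilibrium'' places no restriction on $\psi$, but whatever equilibrium outcome $\mu$ arises, some agent $j$ of the triple has $a_2 \mathrel{P_j} \mu_j$. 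One then changes \emph{only} $j$'s true preference to the truncation $a_2, a_3, \emptyset$. The decisive feature of this move is that it can only shrink the set of objects $j$ ranks above her equilibrium assignment, so the old equilibrium profile remains a Nash equilibrium at the new problem, while its outcome now violates w-popularity there (popular matchings at the new problem give $j$ object $a_2$). Your proposal contains neither ingredient: it never exploits the non-existence region where the mechanism is unconstrained, and it does not supply a mechanism-independent way to transport equilibria between the two problems you construct.

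Concretely, the step you yourself flag as ``delicate'' is where the argument fails, and it cannot be repaired in the form you describe. First, if both $(P,q)$ and $(\tilde P,q)$ admit (unique) w-popular matchings, then a mechanism that selects, in equilibrium, the popular matching of each problem at that problem is not internally inconsistent---these are different preference profiles, and with no anonymity, neutrality, or consistency axiom in the statement, ``swap symmetry'' by itself produces no contradiction; Theorem \ref{thm:MainResultsBeforeCharacterization}(iii) even tells you these outcomes are just the two consistent-$SD$ outcomes, which a single mechanism can happily produce. Second, without strategy-proofness the only available device for linking Nash equilibria across problems is a change of a \emph{single} agent's true preferences that weakly shrinks her upper contour set at her current assignment (the paper's truncation); swapping the true preferences of two tied agents alters both agents' incentive constraints and generically enlarges some upper contour sets, so an equilibrium of $(P,q)$ need not be one of $(\tilde P,q)$, and you have no control over $\psi$'s behavior off the equilibrium path to restore it. So the proposed reduction to a two-agent sub-problem does not go through, and the ``only if'' half of the theorem remains unproved in your outline.
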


As cumulatively ordered weights are either distinct or essentially distinct, the above result holds under them as well. 

\begin{rem}
\label{rem:SDEquilibria}
Note that $SD$ admits weakly dominated Nash equilibria. However, it is immediate to see that every Nash equilibrium yields the same outcome. This, along with Theorem \ref{thm:MainResultsBeforeCharacterization}, implies that  $SD$s that are consistent with the weight profile are $w$-popular in equilibrium whenever the weight profile is  distinct or essentially distinct.
\end{rem}

\bibliographystyle{ecta}
\bibliography{popMechanisms}

\section*{Appendix}

\begin{proof}[Proof of Proposition \ref{prop:non-existence of an popular matching}]
Let us first assume that the weight profile is cumulatively ordered. Given a problem $(P,q)$,  let us produce a matching by letting each agent choose her best remaining object, one by one, following the agent-ordering.\footnote{In other words, we obtain a matching by running a serial dictatorship.} Recall that we have the ordering of $N=\{i_1,..i_n\}$ where $w_{i_k} \geq w_{i_{k'}}$ for each $k < k'$.\footnote{There can be multiple orderings in decreasing order of the weights. We just arbitrarily fix one of them, implying that all of our results hold for any such ordering.} Let us denote the matching obtained by $\mu$.  By definition, agent $i_1$ is matched with her top choice, then agent $i_2$ is matched with her top choice among the remaining ones, and so on.

We now show that $\mu$ is w-popular. Assume for a contradiction that $\mu'$ is more w-popular than $\mu$.  Let $i_j$ be the earliest coming agent in the ordering such that $\mu_{i_j} \neq \mu'_{i_j}$. As agent $i_j$ obtains the best remaining object after excluding the assignments of those coming earlier in the ordering and the latter group have the same assignment under both $\mu$ and $\mu'$,  we have $\mu_{i_j} \mathrel{P_j} \mu'_{i_j}$.  But then, because $w_{i_j} \geq \sum_{k > j} w_{i_k}$, $\mu'$ cannot be more popular than $\mu$ even if each agent who comes later than $i_j$ in the ordering obtains a better assignment under the former. This shows that $\mu$ is w-popular.

For the other part, consider a problem where each object has the quota of one and each agent $i\in N$ has the same preferences: $a_1 \mathrel{P_i} a_2 \mathrel{P_i} \cdots \mathrel{P_i} a_m \mathrel{P_i} \emptyset$.  Let $w_{i_\ell}$ be the lowest value of $\ell$ for which the cumulative ordered condition fails. That is, $w_{i_\ell} < \sum_{k > \ell} w_{i_k}$.  By construction, $w_{i_1} >...> w_{i_{\ell-1}} > w_{i_{\ell}}$.

For contradiction, let $\mu$ be a w-popular matching. As $w_{i_1} >...> w_{i_{\ell-1}}$, and each of these weights is strictly larger than 
all the other weights, under any popular matching, hence in particular under $\mu$,  $\mu_{i_1}=a_1$, $\mu_{i_2}=a_2$, etc, and $\mu_{i_{\ell-1}}=a_{\ell-1}$. Because otherwise either by swapping these agents' assignments with the assignments of the agents with lower weights or giving their more preferred leftover objects, we can obtain a more w-popular matching than $\mu$.  Moreover,  because of the w-popularity of $\mu$, all the other agents receive an object from $\{a_{\ell},..,a_n\}$. 

Let $\mu_{i_{k^*}}=a_{\ell}$ for some $k^*\in \{\ell,..,n\}$.  Then, $\mu$ matches $n-\ell$ objects in $\{a_{\ell+1},..,a_n\}$ to the agents in $\{i_{\ell},i_{\ell+1},\ldots,i_{n}\}\setminus \{i_{k^*}\}$. Let us consider an alternative matching $\mu'$ such that $\mu_{i_k}=\mu'_{i_k}$ for each $k < \ell$.  For each $k,k'\in \{\ell,..,n\}\setminus \{k^*\}$, $\mu'_{i_k}=a_{k'-1}$  where $\mu_{i_k}=a_{k'}$. Lastly, let $\mu'_{i_{k^*}}=a_n$. By definition, all the agents in $\{i_{\ell},..,i_n\}\setminus \{i_{k^*}\}$ strictly prefer their assignments under $\mu'$ to those under $\mu$. All the others are indifferent, except $i_{k^*}$, who prefers her assignment under $\mu$ to $\mu'$. Given all of these as well as  $w_{i_\ell} < \sum_{k > \ell} w_{i_k}$ and $w_{i_{\ell}} \geq w_{i_{\ell+1}}\geq ...\geq w_{i_n}$,  $\mu'$ is more w-popular than $\mu$, contradicting our starting supposition.
\end{proof}

\begin{proof}[Proof of Theorem \ref{thm:MainResultsBeforeCharacterization}]

We start with items (ii) and (iii). Since the weight profile is either distinct or essentially distinct, we have $w_{i_1} > w_{i_2} > ...... > w_{i_{n-1}} \geq w_{i_n}$. That is, only the last two agents may have the same weight.  Let $\mu$ be a popular matching. Then, under $\mu$, each agent $i_k\in \{i_1,...,i_{n-2}\}$ receives her top remaining choice one by one following the agent ordering. To see this, suppose not, and let $i$ be the first agent in this ordering who does not, and let $a$ be the object that is her top remaining choice following the agent ordering. Then either $a$ is not assigned to anyone---in which case a matching that is equal to $\mu$ but assigns $a$ to $i$ is more popular than $\mu$---or $a$ is matched to an agent $j$ below $i$ in the ordering, implying $w_{j}<w_{i}$---in which case a matching that is equal to $\mu$ except that is switches the assignments of $i$ and $j$ is more popular than $\mu$. Both cases would contradict the popularity of $\mu$.  

Let us now consider $i_{n-1}$ and $i_n$. If $w_{i_{n-1}} > w_{i_n}$ (the distinct weight profile case), the same is true for these two agents. That is, each obtains her top remaining choice by following the weight ordering, showing that the outcome of an $SD$ that is consistent with the weight profile is the unique $w$-popular matching. Therefore, when weights are distinct and a $w$-popular matching exists, it is precisely the outcome of this $SD$. Otherwise, we have $w_{i_{n-1}}=w_{i_n}$. That is, $w$ is essentially distinct, hence $w_{i_{n-2}} \geq w_{i_{n-1}}+w_{i_{n}}$.  In this case, these agents alternatively receiving their top remaining choice across problems is compatible with $w$-popularity. This, in turn, shows that  each $SD$ that is consistent with the weight profile is $w$-popular. We already know that each $SD$ is strategy-proof, finishing the proof of item (ii).

Let us consider a problem where a $w$-popular matching exists. Let $\mu$ be a $w$-popular matching, and assume that it is not the outcome of a $SD$ that is consistent with the weight profile. This implies that for some agent $i$ and object $a$, $a \mathrel{P_i} \mu_i$, and  either $\mu_j=a$ for some agent $j$ with $w_j < w_i$ or $|\mu_a| < q_a$. For the latter (case), we let agent $i$ receive object $a$, while keeping the others' assignments the same. For the former (case), we let agents $i$ and $j$ trade their assignments, while keeping the others' assignments the same. In both cases, we obtain a matching that is more $w$-popular than $\mu$, contradicting our starting supposition, finishing the proof of item (iii).

Next, we proceed with item (i). Let $w$ be a weight profile that does not satisfy either of the properties. We then have two cases to consider.

\textbf{Case 1.} $w_{i_{n-2}} < w_{i_{n-1}}+w_{i_n}$ and $w_{i_{n-1}}=w_{i_n}$. Let us consider a problem where $\{a,b,c\}\subseteq O$ and $q_a=q_b=q_c=1$. Let the preferences be such that $P_{i_{n-2}}=P_{i_{n-1}}=P_{i_n}:\ a, b, c, \emptyset$; and all the other agents (if any) find each object unacceptable. Let $\psi$  be a w-popular mechanism and $\psi(P,q)=\mu$. 

Since there are only three agents with acceptable objects, with the same preferences, and none of them have weights strictly larger than the sum of the other two,  there is no w-popular matching in problem $P$. While a $w$-popular mechanism could produce any matching in this scenario, it will in any case be true that for some agent $j\in \{i_{n-2}, i_{n-1}, i_n\}$, $b$ is preferred to her assignment in that matching, that is, $b \mathrel{P_j} \mu_j$. Let us consider  $P'_{j}: b, \emptyset$, and write $P'=(P'_{j}, P_{-j})$. At problem $P'$, there are two w-popular matchings: $\mu'$ and $\mu''$, where under both matchings, agent $j$ is assigned to the object $b$.\footnote{At $\mu'$, either of the agents in $\{i_{n-2}, i_{n-1}, i_n\}\setminus \{j\}$ receives object $a$, and the other receives object $c$. The converse occurs at $\mu''$.} Hence, either $\psi(P',q)=\mu'$ or $\psi(P',q)=\mu''$, showing that agent $j$ benefits from the misreporting.

\textbf{Case 2.} We have $w_{i_{k}}=w_{i_{k+1}}$ where $k+1 < n$. Then, let us consider a preference profile $P$ where $P_{i_k}=P_{i_{k+1}}=P_{i_{k+2}}: a, b, c, \emptyset$; and each other agent (if any) finds each object unacceptable. Note that since $w_{i_k} < w_{i_{k+1}}+w_{i_{k+2}}$, and these are the only agents with acceptable objects, once again there is no popular matching. Moreover, using the same reasoning as in Case 1, we get that $\psi$ cannot be strategy-proof.

\end{proof}

\begin{proof}[Proof of Theorem \ref{thm:SDCharacterization}]
Recall that $N=\{i_1,..,i_n\}$ where $w_{i_k} \geq w_{i_{k'}}$ for each  $k \geq k'$. Let us suppose that $w$ is distinct or essentially distinct. From Theorem $1$, we know that $SD$s that are consistent with the weight profile are both w-popular and strategy-proof. They are also non-wasteful. One can also easily verify that they preserve dispute resolutions as well.

Let $\psi$ be a non-wasteful, w-popular,  and strategy-proof mechanism that also preserves dispute resolution.
Let $(P,q)$ be a problem where a w-popular matching exists. Let $\psi(P,q)=\mu$. In light of Theorem \ref{thm:MainResultsBeforeCharacterization}, there are two classes of weight profiles to consider.

\textbf{Case 1.} Suppose $w$ is distinct. Each agent $i_k$ has to receive her remaining top choice one by one following the weight ordering under $\mu$. Otherwise, let $i_k$ be the agent who comes earliest among the agents whose $SD$ assignments are different than those under $\mu$ (note that as $w$  is distinct, there is only one $SD$ that is consistent with the weight profile). Let $\mu_{i_k}=b$ while her $SD$ assignment is object $a$. By our construction,  each agent $i_{k'}$ with $k' < k$ receives the same object under both $\mu$ and the $SD$ outcome. By the definition of $SD$, it implies that $a \mathrel{P_{i_k}} b$. By the non-wastefulness of $\mu$, there exists an agent $i_{k'}$ with $k' > k$, hence $w_{i_{k'}} < w_{i_k}$, such that $\mu_{i_{k'}}=a$. This matching $\mu$, however, cannot be w-popular as one can obtain a more w-popular matching by swapping the assignments of the agents $i_k$ and $i_{k'}$ under $\mu$. Hence, $\mu$ is nothing but the $SD$ outcome that is consistent with the weights.

\textbf{Case 2.} Suppose that $w$ is not distinct, but essentially distinct. We have  $w_{i_1} > w_{i_2} ...> w_{i_{n-1}}=w_{i_{n}}$.  For each $k \leq n-2$, agent $i_k$ has to receive her remaining top choice, one by one, following the weight ordering under $\mu$. The same arguments as above easily shows that otherwise we could obtain a more popular matching. 
By the non-wastefulness of $\mu$, the rest, i.e., agents $i_{n-1}$ and $i_n$, receive their best remaining choice one by one in any order. This, however, means that $\mu$ is nothing but the outcome of the either of $SD$s that are consistent with the weights.

Let $(P,q)$ be a problem in which a w-popular matching does not exist. Let $\psi(P,q)=\mu'$. If $\mu'$ is  the outcome of the either of $SD$s that are consistent with the weight profile (there might be one or two of them), then there is nothing to prove. Otherwise, let $\mu$ be the outcome of either of these $SDs$. We have $\mu'\neq \mu$. Let $i_k$ be the first agent in the ordering such that $\mu_{i_k}\neq \mu'_{i_k}$. Let $\mu_{i_k}=a$. By the definition of $SD$, object $a$ is the top object for agent $i_k$ after excluding the assignments of the agents  who come earlier than herself in the $SD$ ordering, hence $a \mathrel{P_{i_k}} \mu'_{i_k}$.  This, as well as the non-wastefulness of $\mu'$, implies that for some agent $i_{k'}$,  $\mu'_{i_{k'}}=a$. 

Moreover, as for each agent who comes earlier than $i_k$  in the SD ordering, her assignments  under both $\mu$ and $\mu'$ are the same, we have $w_{i_k} \geq w_{i_{k'}}$. 
We now claim that the relation is strict. Assume for a contradiction that $w_{i_k}=w_{i_{k'}}$. This means that the weight profile is not distinct, hence by our supposition, it is essentially distinct.  Moreover, the agents with these two weights are the last two at any agent-ordering that is consistent with the weight profile. From  above, for each agent $j$ with $w_j > w_{i_k}$, we have $\mu_j=\mu'_j$ (note that the weights are strictly ordered until the last two). But then, the non-wastefulness of $\mu'$ implies that the last two agents, $i_k$ and $i_{k'}$, receive their remaining top choice one by one in any order among themselves. This in turn implies that $\mu'$ is nothing but the outcome of the either of $SD$s that are consistent with the weight profile, yielding a contradiction. Hence, $w_{i_k} > w_{i_{k'}}$. 

Let us now go back problem $(P,q)$ where  $a \mathrel{P_{i_k}} \mu'_{i_k}$ and $\mu'_{i_{k'}}=a$. From above, we also know that $w_{i_k} > w_{i_{k'}}$. By the strategy-proofness of $\psi$,  for any $P'_{i_k} \in \mathcal{P}_a$,  $\psi_i(P'_{i_k}, P_{-i_k}, q)=\emptyset$. Therefore, since $\psi$ preserves dispute resolutions, for each $(P'',q')$ where $q'_a=1$, both $P''_{i_k}, P''_{i_{k'}}\in \mathcal{P}_a$ and $P''_k\in \mathcal{P}_{\emptyset}$ for each other agent $k$,   we have $\psi_{i_k}(P'',q')=\emptyset$ and $\psi_{i_{k'}}(P'',q')=a$. This matching, however, is not w-popular as the unique w-popular matching $\mu''$ at $(P'',q')$ is such that $\mu''_{i_k}=a$ and $\mu''_j=\emptyset$ for each other agent $j$. This contradicts the w-popularity of $\psi$, finishing the proof.
\end{proof}

\emph{Independence of the Axioms in Theorem \ref{thm:SDCharacterization}}

\textbf{A mechanism that satisfies all the axioms except w-popularity}: It is easy to see that \cite{Gale1962-mn}'s deferred-acceptance mechanism satisfies all the properties except w-popularity.

\textbf{A mechanism that satisfies all the axioms except preserving dispute resolutions}: Let us now consider a problem where $N=\{i_1,..,i_5, i_6\}$ and $O=\{a_1,..,a_5, a_6\}$. Let $q$ be such that $q_a=1$ for each object. Let the weights be $20, 10, 5, 4, 3, 2$. Let $P_{i_1}=P_{i_2}:\ a_1, a_2, \emptyset$; and $P_k:\ a_1, a_2, a_3, a_4, a_5, a_6, \emptyset$ for each other agent $k$.  If we write $N'=\{i_1, i_2\}$, then  no popular matching exists in $(P'_{N'}, P_{-N'},q)$ for any $P'_{N'}\in \mathcal{P}^{|N'|}$.\footnote{For any $N'\subset N$, $P_{N'}$ stands for the preference profile of the agents in $N'$.} Let $\psi$ be the mechanism such that at any problem $(P'_{N'}, P_{-N},q)$ where $P'_{N'}\in \mathcal{P}^{|N'|}$, it  gives the $SD$ outcome under the ordering of $i_2, i_1,i_3, i_4, i_5, i_6$.  Otherwise, it gives the $SD$ outcome where the ordering is consistent with the weights. This mechanism $\psi$ satisfies all the properties except preserving dispute resolutions (because agent $i_1$ receives object $a_1$ whenever she and $i_2$ report only object $a_1$ acceptable while others report no object acceptable).

\textbf{A mechanism that satisfies all the axioms except non-wastefulness}: Let us consider a  problem $N=\{i_1,...,i_4\}$ and $O=\{a_1,..,a_4\}$.  Let $q$ be such that $q_a=1$ for each object. Let the weights be $7, 5, 3, 1$. Let $P_{i_1}=P_{i_2}=P_{i_3}:\ a_1, a_2, a_3, \emptyset$; and $P_{i_4}:\ a_4, \emptyset$. Let $\psi(P,q)=\mu$ where $\mu_{i_k}=a_k$ for each $k \leq 3$ and $\mu_{i_4}=\emptyset$. Note that if no popular matching exists in a problem, then it continues not to exist for any preferences of agent $i_4$ keeping the others preferences' the same. Let $\psi$ be such that whenever a w-popular matching does not exists at a problem,  each agent except $i_4$ receives her $SD$ (that is consistent with the weights) assignment while agent $i_4$ is left unassigned. Otherwise, $\psi$ produces the outcome of $SD$ that is consistent with the weights. One can easily verify that $\psi$ satisfies all the properties except non-wastefulness. 

\textbf{A mechanism that satisfies all the axioms except strategy-proofness}: Let us consider $N=\{i,j,k\}$ and $O=\{a,b,c\}$.  Let $q$ be such that $q_a=1$ for each object. Let the weights be $4, 3, 2$. Let $P_i=P_j=P_k:\ a_1 ,a_2 ,a_3, \emptyset$. Note that there is no w-popular matching at problem $(P,q)$. Let $\psi(P,q)=\mu$ where $\mu_{i_1}=a_2$, $\mu_{i_2}=a_1$, and $\mu_{i_3}=a_3$. For all other problems, it is $SD$ that is consistent with the weights. It satisfies all the properties except strategy-proofness.

\begin{proof}[Proof of Theorem \ref{thm:no-pop-in-equilibrium}]
``If" part directly comes from Theorem \ref{thm:MainResultsBeforeCharacterization} and Remark \ref{rem:SDEquilibria}. For the other part, let us consider a weight profile $w$ that does not satisfy any of the properties listed.  Let $\{i_k, i_{k+1}, i_{k+2}\}\subseteq N$ and $\{a_1, a_2, a_3\}\subseteq O$. Let $q_a=1$ for each object $a\in \{a_1, a_2, a_3\}$.  As $w$ satisfies none of the properties, we have the following two cases.

\textbf{Case 1}: $w_{i_k}=w_{i_{k+1}} \geq w_{i_{k+2}}$ where $k+2 < n$ or \textbf{Case 2}: $w_{i_{k+1}}=w_{i_{k+2}}$, $w_{i_k} < w_{i_{k+1}}+w_{i_{k+2}}$, and $k+2=n$ (note that $n$ is the last agent).  The arguments below work for  both cases. For ease of writing, let $i_1=i_k$, $i_2=i_{k+1}$, and $i_3=i_{k+2}$. 

Let $\psi\in \Omega$ be a mechanism that is w-popular in equilibrium. Let us consider a problem $(P,q)$ where  $P_{i_1}=P_{i_2}=P_{i_3}:\ a_1, a_2, a_3, \emptyset$; and each other agent (if any) finds every object unacceptable.  As shown in the proof of Theorem \ref{thm:MainResultsBeforeCharacterization} above, in problem $(P,q)$, there is no w-popular matching. Let $P'$ be an equilibrium in $(P,q)$ and $\psi(P',q)=\mu$. Note that, regardless of what is the matching $\mu$ produced by $\psi$ in this scenario without a popular matching, it must be the case that for some agent $i\in \{i_1, i_2, i_3\}$, $a_2 \mathrel{P_i} \mu_i$. Let $a_2 \mathrel{P_j} \mu_j$, where $j\in \{i_1, i_2, i_3\}$. 

Let us next consider a problem $(P'',q)$ where $P''_{j}:\ a_2, a_3, \emptyset$; and each other agent's preferences is the same as that under $P$.  Under both cases 1 and 2 above, there are only two $w$-popular matchings $\mu$ and $\mu'$, where $\mu_j=\mu'_j=a_2$,  and the agents in $\{i_1, i_2, i_3\}\setminus \{j\}$ alternatively receive $a_1$ and $a_3$ at these matchings.\footnote{Note that if $j=i_2 (i_3)$ and Case $2$ holds, then the unique $w$-popular matching among those is such that agents $i_1$, $i_2$, and $i_3$ respectively receives $a_1$, $a_2 (a_3)$, and $a_3 (a_2)$. Otherwise, both $\mu$ and $\mu'$ are the only $w$-popular matchings.} The remaining players are left unassigned under both matchings, since otherwise, a matching that leaves them unassigned would be more $w$-popular matching than $\mu$ or $\mu'$.

We next claim that $P'$ is an equilibrium at $(P'',q)$ under $\psi$. We first claim that  no agent  $k\in N\setminus \{i_1, i_2, i_3\}$ has incentive to deviate from $P'$.  Assume for a contradiction that for some agent $k\in N\setminus \{i_1, i_2, i_3\}$, we have $\hat{P}_k$ such that $\psi_k(\hat{P}_k, P'_{-k},q) \mathrel{P''_k} \psi_k(P',q)$. As $P''_k=P_k$, it means that $\psi_k(\hat{P}_k, P'_{-k},q) \mathrel{P_k} \psi_k(P',q)$, contradicting $P'$ being an equilibrium at $(P,q)$.
 
We have established that $j$ is such that $a_2 \mathrel{P_j} \psi_j(P',q)=\mu_j$. This means that $\mu_j\in \{a_3, \emptyset\}$. If $\mu_j=a_3$, then the fact that $P'$ is an equilibrium at $(P,q)$ implies that there is no $\bar{P}\in \mathcal{P}$ such that $\psi_j(\bar{P}, P'_{-j},q) \in \{a_1, a_2\}$. Similarly, if $\mu_j=\emptyset$, there is no $\hat{P}\in \mathcal{P}$ such that $\psi_j(\hat{P}, P'_{-j},q) \in \{a_1, a_2, a_3\}$. Note that since $P''$ differs from $P'$ only in agent $j$'s preferences, $\psi(\bar{P}, P'_{-j},q)=\psi(\bar{P}, P''_{-j},q)$ and $\psi(\hat{P}, P'_{-j},q)=\psi(\hat{P}, P''_{-j},q)$. All these show that agent $j$ does not have incentive to deviate from $P'$ at $(P'',q)$.

Let $k\in \{i_1, i_2, i_3\}\setminus \{j\}$. As $P''_k=P_k$ and $P'$ is an equilibrium at $(P,q)$, he does not incentive to deviate from $P'_k$ at  $(P'',q)$. All these show the claim. Thus, under $\psi$, $P'$ is an equilibrium at $(P'',q)$. However, $\psi(P',q)=\mu$ is not $w$-popular at $(P'',q)$. This is because $\mu_j\neq a_2$ whereas at any $w$-popular matching at $(P'',q)$, agent $j$ receives $a_2$.  This contradicts $\psi$ being $w$-popular in equilibrium.

\end{proof}

\end{document}